\DeclareMathOperator*{\argmax}{\rm argmax}
\DeclareMathOperator*{\argmin}{\rm argmin}
\newcommand{\E}{\mathrm{E}}
\newcommand{\n}{\nonumber}
\newcommand{\nn}{\nonumber\\}
\newcommand{\rd}{\mathrm{d}}
\newcommand{\lo}{\mathrm{O}}
\newcommand{\inner}[1]{\langle #1\rangle}
\newcommand{\cX}{\mathcal{X}}
\newcommand{\com}{\,,}
\newcommand{\per}{\,.}
\newtheorem{lemma}{Lemma}
\newtheorem{theorem}{Theorem}
\newtheorem{remark}{Remark}
\newenvironment{proof2}[1]{{\it Proof #1:}}{\vspace{1.5mm}}
\newcommand{\fqed}{\hfill $\blacksquare$ \par}
\def\dqed{\relax\tag*{\fqed}}
\newcommand{\tx}{\tilde{x}}
\newcommand{\fl}[1]{\lfloor#1\rfloor}
\newcommand{\uI}{\underline{I}}
\newcommand{\oI}{\overline{I}}
\begin{document}

\title{
Variable-to-Fixed Length Homophonic Coding Suitable for
Asymmetric Channel Coding
}

\author{\IEEEauthorblockN{Junya Honda \qquad Hirosuke Yamamoto}
\IEEEauthorblockA{Graduate School of Frontier Sciences,\\
The University of Tokyo,\\
Kashiwa-shi Chiba 277--8561, Japan\\
Email: honda@it.k.u-tokyo.ac.jp, Hirosuke@ieee.org}
}

\maketitle

\begin{abstract}
In communication through asymmetric channels
the capacity-achieving input distribution is not uniform in general.
Homophonic coding is a framework to invertibly convert a (usually uniform) message into a sequence with
some target distribution, and is
a promising candidate to generate codewords with the nonuniform target distribution for asymmetric channels.
In particular, a Variable-to-Fixed length (VF) homophonic code can be
used as a suitable component for channel codes
to avoid decoding error propagation.
However, the existing VF homophonic code requires the knowledge of
the maximum relative gap  of probabilities between
two adjacent sequences beforehand, which is an unrealistic assumption for long block codes.
In this paper\footnote{%
This is the full version of the paper
to appear in IEEE International Symposium on Information Theory (ISIT2017) with some minor corrections.} we propose a new VF homophonic code without such a requirement by allowing
one-symbol decoding delay.
We evaluate this code theoretically and experimentally
to verify its asymptotic optimality.
\end{abstract}

\IEEEpeerreviewmaketitle

\allowdisplaybreaks[4]

\section{Introduction}
In communication through asymmetric channels
the capacity-achieving input distribution
is not uniform in general.
On the other hand, in most practical codes such as linear codes
all symbols appear with almost the same frequency
and some modification is necessary to use them as optimal codebooks.
Although it is 
known that
biased codewords can be generated
from an auxiliary code over an extended alphabet
based on Gallager's nonlinear mapping \cite[p.\,208]{gallager_map},
its complexity becomes very large
when the target distribution is not expressed
in simple rational numbers.


A promising solution to this problem is
to generate such biased codewords by homophonic coding.
Homophonic coding is a framework to
invertibly convert a message with distribution $P_U$
into another sequence with target distribution $P_X$.
This framework is intuitively similar to a dual of lossless compression,
where a biased source sequence is encoded into an almost uniform sequence.
In fact,
the inverse of a lossless code based on an LDPC matrix
is used to construct capacity-achieving channel
code in
\cite{miyake_channel_general}
although a practical algorithm for this code has not been known.

A homophonic code is called {\it perfect}
if the generated sequence exactly follows the target distribution.
A perfect Fixed-to-Variable length (FV) homophonic code%
\footnote{
A similar FV code is proposed in \cite{boc_fv} later
that is asymptotically perfect.}
can be constructed \cite{homophonic_interval} based
on an interval algorithm similar to a random number generator \cite{interval}.
This code can actually be applied to generation of biased codewords
for LDPC codes and polar codes
\cite{honda_phd}\cite{wang_memory}
to achieve the capacity.
These codes require a homophonic code
applicable to a non-i.i.d.~sequence
to generate a codeword with some structure (such as parity-check constraints).
On the other hand,
a general channel coding framework is proposed in \cite{boc_chain}\cite{howto_capacity}
where any homophonic code for i.i.d.~sequences can be used
at a cost that each codeword consists of some blocks
and the entire code length becomes large.

Although the FV homophonic code in \cite{homophonic_interval}
is perfect and
achieves the asymptotic bound on the coding rate,
it is not appropriate to use this code
as a component of a channel code.
This is because
an FV channel code is hard
to run in parallel and
suffers decoding error propagation,
which is a serious problem for channel coding.
Therefore
it is desirable to use a Fixed-to-Fixed length (FF) or Variable-to-Fixed length (VF) homophonic code
to avoid such error propagation.

Since it is difficult to construct
a perfect homophonic code in the VF and FF frameworks\footnote{Although
a perfect FF homophonic code
is also constructed in \cite{homophonic_interval},
it suffers decoding errors with asymptotically vanishing but positive probability.},
there are some studies on homophonic codes
whose output distribution asymptotically matches the target distribution.
Such an asymptotically matching distribution is practically sufficient
for the channel coding application,
in contrast to the
original motivation for homophonic coding \cite{gunther}\cite{homophonic}
where an exactly matching random sequence is required to apply to cryptography.

B\"ocherer and Mathar
considered homophonic coding with the name {\it distribution matching}
and proposed
a VF homophonic code
based on Huffman coding with an approximation of the target distribution by a dyadic distribution
\cite{homophonic_huffman}\cite{bocherer_phd}.
This code requires $n$-symbol extension
to achieve $\lo(1/n)$ redundancy even though
the complexity does not scale with $n$ as in Huffman coding.
Schulte and B\"ocherer \cite{homophonic_ff}
proposed
an FF homophonic code which outputs
sequences in a fixed type class, which has redundancy $\lo((\log n)/ n)$.
Honda and Yamamoto \cite{homophonic_isita} proposed a VF homophonic code
with redundancy $\lo(1/n)$ with linear complexity
by combining Shannon-Fano-Elias code and Gray code.
Whereas the code in \cite{homophonic_ff} is easier to handle in real systems
from the nature of FF codes,
the VF code in \cite{homophonic_isita}
is easy to extend to non-i.i.d.~processes.
Thus this VF code can be applied to the coding framework
in \cite{boc_chain}\cite{howto_capacity} even if the channel is not memoryless
where the capacity achieving input distribution is not i.i.d.
A drawback of the scheme in \cite{homophonic_isita} is that
an upper bound on the maximum relative probability gap
\begin{align}
\max_{x,x'\in \cX, x^{n-1}\in\cX^{n-1}}
P_{X_n|X^{n-1}}(x|x^{n-1})/P_{X_n|X^{n-1}}(x'|x^{n-1})\n
\end{align}
has to be known beforehand.
This value is easy to compute for Markov processes of (not very large) order $k$ 
but hard to compute
for general block codes,
which makes application to block codes with structures difficult.

In this paper we propose a new VF homophonic coding scheme,
which encodes a variable-length uniform sequence into
an $n$-bit sequence with some target probability distribution $P_{X^n}$.
Here $X^n$ does not have to be i.i.d.~and
the only requirement is that, as in arithmetic coding,
the conditional probability $P_{X_k|X^{k-1}}(x_k|x^{k-1})$ for each $k \in \mathbb{N}$
be computable for a given $x^k$.
The cost for this advantage is that
the scheme requires one code-symbol decoding delay, but
a decoding error propagates to at most one block with very high probability.
We prove the asymptotic optimality of the scheme
under some conditions
and confirm its performance by simulations for
asymmetric channel coding application.

\section{Preliminaries}\label{sec_pre}
We consider a VF homophonic coding problem
to encode a uniform input sequence
$U^{\infty}=(U_1,U_2,\cdots)\in\{0,1\}^{\infty}$ into a sequence
$x^{\infty}\in\mathcal{X}^{\infty}$ where
$\mathcal{X}=\{0,1,\cdots,|\cX|-1\}$
with target probability distribution $P_{X^{\infty}}$.
A random variable with distribution $P_{X^{\infty}}$ is denoted by
$X^{\infty}=(X_1,X_2,\cdots)$.
In a VF homophonic coding scheme,
variable-length subsequences $u_{1}^{l_1},u_{l_1+1}^{l_2},\cdots$ are
invertibly encoded into fixed-length sequences $x_1^n,x_{n+1}^{2n},\cdots$,
where a subsequence is denoted by, e.g., $x_i^j=(x_i,x_{i+1},\cdots,x_j)$.
We consider the case where blocks $X_1^n,\,X_{n+1}^{2n},\cdots$
are i.i.d., whereas symbols $X_1,\,X_2,\cdots,\,X_n$
in $X_1^n$ may be non-i.i.d.
For this reason, we often write $(X_{1,(1)}^n,\allowbreak X_{1,(2)}^n,\cdots)$
instead of $(X_{1}^n,\,X_{n+1}^{2n},\cdots)$.
We discuss the extension to a general sequence
$X^{\infty}=(X_1,X_2,\cdots)$ in Remark \ref{remark_noniid}.

Let $\phi$ be a (possibly random) encoding function
of a homophonic code.
This code is called
{\it perfect} if
the generated sequence
$\tilde{X}^{\infty}=\phi(U_1^{l_1})\allowbreak
\phi(U_{l_1+1}^{l_1+l_2})\cdots$ exactly follows the distribution $P_{X^{\infty}}$ for
the input sequence $U_1^{l_1}U_{l_1+1}^{l_1+l_2}\cdots$ i.i.d.~from $P_U$.
We measure the gap between the generated sequence and $X^{\infty}$ by the max-divergence
per block, denoted by
\begin{align}
D_{\phi}=
\sup_{m\ge 1}
\frac1m\max_{x^{mn} \in \cX^{mn}}
\log\frac{P_{\tilde{X}^{mn}}(x^{mn})}{P_{X^{mn}}(x^{mn})}\ge 0\per\label{def_gap}
\end{align}
A homophonic code $\phi$ is perfect if and only if
$D_{\phi}=0$.
Note that a codeword $x^n$
with decoding error probablity $p_{\mathrm{e}}(x^n)$
is generated with probability at most $2^{D_{\phi}}P_{X^n}(x^n)$
under the homophonic code $\phi$.
As a result, if the block decoding error probability of some block code is
$P_{\mathrm{e}}$ under the ideal distribution $P_{X^n}$,
the error probability generated by the homophonic code
is bounded by $2^{D_{\phi}}P_{\mathrm{e}}$.

Let $\lfloor r \rfloor_{l}$
denote the first $l\in\mathbb{N}$ bits of the binary expansion of $r\in[0,1)$.
We sometimes identify $\fl{r}_l\in\allowbreak\{0,1\}^l$
with the real number $2^{-l}\lfloor 2^l r \rfloor\in [0,1)$.
The real number corresponding to the $l+1,\,l+2,\cdots$-th bits
is denoted by $\inner{r}_l=\allowbreak 2^l r-\fl{2^lr} \in [0,1)$.
Thus we have $r=\fl{r}_l+2^{-l}\inner{r}_l$.
We also define $\inner{r}=r-\fl{r} \in [0,1)$ for any $r\in \mathbb{R}$.
We
define the cumulative distribution function
given $X_1$
and its inverse
by
\begin{align}
F_{x_1}(x_2^n)&=\sum_{a_2^n \preceq x_2^n}P_{X_2^n|X_1}(a_2^n|x_1)\com\nn
F_{x_1}^{-1}(r)&=\min\{x_2^n: F_{x_1}(x_2^n)> r\}\com\n
\end{align}
where $\preceq$ denotes the lexicographic order and $\min$ is taken under this order.
We write $x_2^n-1$ for the last sequence before $x_2^n$, that is,
the largest sequence $y_2^n$ such that
$y_2^n\precneqq x_2^n$.

\section{Delayed VF Homophonic Code}\label{sec_delay}
In this section we propose a Delayed VF Homophonic (DVFH) code,
which is based on an interval partitioning
as in the cases of arithmetic coding
and the homophonic code in \cite{homophonic_interval}.

First we give an intuition for the DVFH code
before its description.
In a DVFH code, we do not assign
the full information specifying the (variable-length)
input sequence $u^l$ to $x^n$,
and instead,
we assign the information that is required to specify $u^l$
if $(\log|\cX|)$-bits of information
were obtained in addition to $x^l$.
This additional information is assigned to
the first element $x_1$ of the next encoding block,
which causes the code to incur a one-symbol decoding delay.
By this assignment of $(\log|\cX|)$-bits of information,
the distribution of $x_1$ becomes different from $P_{X_1}$
but the remaining sequence $x_2^n$ almost follows $P_{X_2^n|X_1}$.

Let $l(I)\in \mathbb{N}$ for an interval
$I=[\uI,\oI)\subset [0,1)$
denote the largest integer $l$
such that
\begin{align}
\oI \le \fl{\uI}_l+|\mathcal{X}|\cdot 2^{-l}\per\label{cond_li}
\end{align}
In this way the lower and upper bounds of an interval $I$
are always denoted by $\uI$ and $\oI$, respectively.

As detailed later,
the encoder of a DVFH code sends $\tx_2^n$ such that
$F_{\tx_1}(\tx_2^n-1) \le 0.u_1u_2\cdots < F_{\tx_1}(\tx_2^n)$.
Therefore
$\fl{F_{\tx_1}(\tx_2^n-1)}_{l_0}<0.u_1u_2\cdots
\le \fl{F_{\tx_1}(\tx_2^n-1)}_{l_0}+|\mathcal{X}|\cdot 2^{-l_0}$
holds
for $l_0=l([F_{x_1}(x_2^n-1),F_{x_1}(x_2^n)))$.
This implies that 
if the decoder gets additional $(\log |\mathcal{X}|)$-bits of information
then $u_1,u_2,\cdots,u_{l_0}$ can be recovered.

Let $\sigma(\cdot; \mathcal{V})$
be a permutation of $\cX=\{0,1,\cdots,|\mathcal{X}|-1\}$
representing the descending order of $\mathcal{V}=\{v_i\}\in \mathbb{R}^{|\mathcal{X}|}$,
that is,
we have
$v_{\sigma(0;\mathcal{V})}\ge v_{\sigma(1;\mathcal{V})}\ge\cdots
\ge\allowbreak v_{\sigma(|\mathcal{X}|-1;\mathcal{V})}$.
Here ties are broken arbitrarily in a unified manner between
the encoder and decoder.
The encoding and decoding algorithms of
a DVFH code are given in
Algorithms \ref{enc_dvfh} and \ref{dec_dvfh},
where the input $u^{\infty}$ is decoded into $\hat{u}^{\infty}$.
Here the length of an interval $I=[\uI,\,\oI)$ is expressed by
$|I|=\oI-\uI$.
Step \ref{dec_exp} of the decoding algorithm is exception handling
and the condition is never satisfied if the decoder receives the generated sequence $\tilde{x}^n$ 
without error.


First we
give a theorem on the unique decodability
and the error propagation probability of a DVFH code.
\begin{theorem}\label{thm_decodable}
(i) A DVFH code satisfies
$u_{t_{j-1}+1}^{t_{j}}=\hat{u}_{\hat{t}_{j-1}+1}^{\hat{t}_{j}}$
for all $j=1,2,\cdots$.
(ii) Fix $j_0\in\mathbb{N}$ and $y^n\in\mathcal{X}^n$ arbitrarily.
If the
sequence of codewords $\tilde{x}_{(1)}^n,\,\tilde{x}_{(2)}^n,\,\cdots$
is generated
by the encoder
from
uniformly distributed $u^{\infty}\in \{0,1\}^{\infty}$
and a sequence
$\tilde{x}_{(1)}^n,\,\cdots,\tilde{x}_{(j_0-1)}^n,\,
y^n,\,\tilde{x}_{(j_0+1)}^n,\,\tilde{x}_{(j_0+1)}^n,\cdots$
is parsed by the decoder, then
\begin{align}
\Pr[u_{t_{j_0+k}}^{\infty}\neq
\hat{u}_{\hat{t}_{j_0+k}}^{\infty}
]
\le
(4p_{\max})^{k},\quad \forall k\in \mathbb{N}
\label{prob_prop}
\end{align}
where
$p_{\max}=\max_{x^n\in\mathcal{X}^n}P_{X_2^n|X_1}(x_2^n|x_1)$.
(iii) The max-divergence between the target distribution
and the distribution of the generated sequence satisfies
\begin{align}
D_{\phi}
&\le
\log \max_{i \in \{0,1,\cdots,|\cX|-1\}}
\frac{2}{(i+1)\max_{x\in \cX}^{(i)}P_{X_1}(x)}
\label{bound_dist}
\end{align}
where $\max_{x\in \cX}^{(i)}f(x)=f(\sigma(i;\{f(x)\}_{x\in \cX}))$ is the
$i$-th largest value in $\{f(x)\}_{x\in \cX}$.
\end{theorem}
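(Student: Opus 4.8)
The plan is to establish (i), (ii) and (iii) in this order, throughout carrying along, for each block $j$, the interval $I_{(j)}=[F_{\tilde x_{1,(j)}}(\tilde x_{2,(j)}^n-1),F_{\tilde x_{1,(j)}}(\tilde x_{2,(j)}^n))$ used by the encoder, its level $l_0^{(j)}=l(I_{(j)})$, and the index $i^\ast_{(j)}\in\{0,\dots,|\cX|-1\}$ of the length-$2^{-l_0^{(j)}}$ dyadic cell containing the portion of the input read in block $j$; by the construction of Algorithm \ref{enc_dvfh} the first symbol of the next codeword equals $\tilde x_{1,(j+1)}=\sigma(i^\ast_{(j)};\{P_{X_1}(x)\}_{x\in\cX})$. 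For (i) I would induct on $j$: assuming blocks $1,\dots,j-1$ have already been parsed correctly and that the first symbol of codeword $j+1$ is available, the decoder reconstructs $I_{(j)}$ (hence $l_0^{(j)}$) from codeword $j$ through $F_{\tilde x_{1,(j)}}$; by \eqref{cond_li} the $|\cX|$ dyadic cells of length $2^{-l_0^{(j)}}$ starting at $\fl{F_{\tilde x_{1,(j)}}(\tilde x_{2,(j)}^n-1)}_{l_0^{(j)}}$ cover $I_{(j)}$, and the identity noted in the text, $\fl{F_{\tilde x_{1,(j)}}(\tilde x_{2,(j)}^n-1)}_{l_0^{(j)}}<0.u_{t_{j-1}+1}u_{t_{j-1}+2}\cdots\le\fl{F_{\tilde x_{1,(j)}}(\tilde x_{2,(j)}^n-1)}_{l_0^{(j)}}+|\cX|2^{-l_0^{(j)}}$, says the input read in block $j$ lies in exactly one of these cells; since $\tilde x_{1,(j+1)}$ names that cell, the decoder recovers $\fl{0.u_{t_{j-1}+1}u_{t_{j-1}+2}\cdots}_{l_0^{(j)}}$ and hence $u_{t_{j-1}+1}^{t_j}$, which also pins down $t_j$. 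One then has to check that Step \ref{dec_exp} is never reached on a clean codeword, which is immediate since then $i^\ast_{(j)}$ always names a cell that $I_{(j)}$ actually meets.

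For (ii) the starting point is that each block is decoded from its own codeword together with only the first symbol of the following one, so corrupting codeword $j_0$ leaves blocks $1,\dots,j_0-2$ untouched and only block $j_0$ (and possibly block $j_0-1$, through the corrupted look-ahead symbol) is directly affected; the error can reach block $j_0+1$ and beyond only through the internal state the decoder carries between blocks. I would run a recursion over $m=1,\dots,k$: conditioned on the error having survived up to block $j_0+m-1$, block $j_0+m$ is still mis-decoded only if the true and the wrongly reconstructed residuals fall into the same interval of the partition used by block $j_0+m$, and the conditional probability of this event is at most $4p_{\max}$ — the factor $p_{\max}=\max_{x_1,a_2^n}\bigl|[F_{x_1}(a_2^n-1),F_{x_1}(a_2^n))\bigr|$ being the largest interval width and the factor $4$ coming from the two-sided slack $(|\cX|-1)/2<2^{l_0^{(j_0+m)}}|I_{(j_0+m)}|\le|\cX|$ implied by \eqref{cond_li} together with the one-symbol look-ahead. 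Multiplying these conditional bounds over $m$ gives \eqref{prob_prop}. I expect this part to be the main obstacle: it requires a precise description of the decoder's carried state and of exactly which event allows a decoding error to cross one more block boundary, and the bound $4p_{\max}$ on that event's conditional probability must hold uniformly over the arbitrary corruption $y^n$ and over all prior history.

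For (iii), use that the blocks of $X^\infty$ are i.i.d.\ while those of $\tilde X^\infty$ form a Markov chain through the carry symbols, so by the chain rule $\log\bigl(P_{\tilde X^{mn}}(x^{mn})/P_{X^{mn}}(x^{mn})\bigr)=\sum_{j=1}^{m}\log\bigl(P_{\tilde X_{(j)}^n|\tilde X_{(1)}^n,\dots,\tilde X_{(j-1)}^n}(x_{(j)}^n|\cdots)\,/\,[P_{X_1}(x_1)P_{X_2^n|X_1}(x_2^n|x_1)]\bigr)$, and it suffices to bound each summand by the right-hand side of \eqref{bound_dist}. Split the $j$-th ratio into a within-block factor and a carry-symbol factor. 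The within-block factor equals $1$: the number of input bits consumed by a block is (up to the exception handling) a stopping time, so the residual passed to block $j$ is again uniform on $[0,1)$ and independent of the past codewords, whence $\tilde X_{2,(j)}^n$ conditioned on $\tilde X_{1,(j)}=x_1$ and the past is distributed exactly as $P_{X_2^n|X_1}(\cdot|x_1)$. The carry-symbol factor is $P_{\tilde X_{1,(j)}|\text{past}}(x_1|\cdots)/P_{X_1}(x_1)$; writing $x_1=\sigma(i;\{P_{X_1}(x)\}_{x\in\cX})$, the numerator is the conditional probability that the cell index of the (conditionally uniform on $I_{(j-1)}$) input equals $i$, namely $|I_{(j-1)}\cap(\text{cell }i)|/|I_{(j-1)}|\le 2^{-l_0^{(j-1)}}/|I_{(j-1)}|$, and combining $2^{l_0^{(j-1)}}|I_{(j-1)}|>(|\cX|-1)/2$ with the fact that $I_{(j-1)}$ meeting cell $i$ forces $i<2^{l_0^{(j-1)}}|I_{(j-1)}|+1$ shows that this is at most $2/(i+1)$; dividing by $P_{X_1}(\sigma(i;\cdot))=\max_{x\in\cX}^{(i)}P_{X_1}(x)$ and maximizing over $i$ gives \eqref{bound_dist}. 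The one delicate point is the stopping-time claim, i.e.\ that the exception handling absorbs the rare event that resolving $\tilde x_{2,(j)}^n$ needs reading beyond the $l_0^{(j)}$ consumed bits.
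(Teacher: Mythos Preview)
Your plan for (i) is serviceable, but there are genuine gaps in (ii) and especially (iii), both traceable to the same misconception about the encoder's internal state.

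\medskip
\textbf{The central error (part (iii)).} You claim that ``the residual passed to block $j$ is again uniform on $[0,1)$'' and hence that $\tilde X_{2,(j)}^n$ given $\tilde X_{1,(j)}=x_1$ and the past follows $P_{X_2^n|X_1}(\cdot\mid x_1)$ exactly, so the within-block factor equals $1$. This is false, and the paper says so explicitly: the key lemma (Lemma~\ref{lem_unif}) establishes only that $r$ is uniform on the carried interval $I$ with $|I|>1/2$, \emph{not} on $[0,1)$. The paper even remarks after Theorem~\ref{thm_decodable} that this is precisely why no rate guarantee is available without the modification of Theorem~\ref{thm_performance}. Consequently the within-block ratio is not $1$ but is bounded by $P_{X_2^n|X_1}(\tilde x_2^n|\tilde x_1)/|I|<2\,P_{X_2^n|X_1}(\tilde x_2^n|\tilde x_1)$, and this is the source of the factor $2$ in \eqref{bound_dist}. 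Your attempt to manufacture the $2$ inside the carry-symbol factor via a geometric argument on $2^{l_0}|I_{(j-1)}|$ does not work and is not how the bound arises.

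\medskip
\textbf{Missing the role of the double permutation.} For the carry-symbol factor you also overlook how $\tilde x_{1,(j+1)}$ is actually set: Step~\ref{enc_k} computes $k=\sigma^{-1}(i^\ast;\{|I'_i|\}_i)$, i.e.\ the \emph{size rank} of the chosen cell, and Step~\ref{enc_x} then sets $\tilde x_1=\sigma(k;\{P_{X_1}(x)\}_x)$. Thus $\tilde x_1$ has the $i_0$-th largest $P_{X_1}$-value if and only if $r$ landed in the $i_0$-th largest cell $I'_{\sigma(i_0;\{|I'_i|\})}$. Since the $|\cX|$ cells partition $I'$, the $i_0$-th largest has relative measure at most $1/(i_0+1)$, giving $P_{\tilde X_{1,(j)}}(\sigma(i_0;\{P_{X_1}\}))\le 1/(i_0+1)$ directly --- no geometry on $l_0$ or $|I|$ is needed here. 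Your formula $\tilde x_{1,(j+1)}=\sigma(i^\ast_{(j)};\{P_{X_1}\})$ omits the inner $\sigma^{-1}(\cdot;\{|I'_i|\})$ and with it the mechanism that yields the clean $1/(i_0+1)$.

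\medskip
\textbf{Part (ii).} The recursion is right in spirit, but the event you name (``true and wrongly reconstructed residuals fall into the same interval'') and your explanation of the factor $4$ via ``two-sided slack'' are not the correct mechanism. The error stops propagating at block $j_0+k$ as soon as \emph{both} the encoder's and the decoder's $I'$ coincide with the full probability interval $[F_{\tilde x_1}(\tilde x_2^n-1),F_{\tilde x_1}(\tilde x_2^n))$, i.e.\ the intersection with the carried $I$ is trivial on each side. By Lemma~\ref{lem_unif} ($r$ uniform on $I$, $|I|>1/2$) this fails on the encoder's side with probability at most $p_{\max}/|I|\le 2p_{\max}$; the same bound holds on the decoder's side, and a union bound gives $4p_{\max}$. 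The factor $4=2\times 2$ thus comes from the two sides combined with $|I|>1/2$, not from the slack in \eqref{cond_li}.

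\medskip
In short: you need Lemma~\ref{lem_unif} (uniformity over $I$ with $|I|>1/2$, not over $[0,1)$) as the backbone of all three parts, and you need to track the size-rank/probability-rank coupling induced by Steps~\ref{enc_k} and \ref{enc_x}.
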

The first part of this theorem shows that this code is uniquely decodable.
The second part shows that
if an error occurred in the $k$-th codeword block $\tx_{1,(k)}^n$
then the error may occur in the decoding of $\tx_{1,(k-1)}^n$ and $\tx_{1,(k)}^n$,
but propagates to that of $x_{1,(k+1)}^n,\,x_{1,(k+2)}^n,\cdots$
with exponentially small probability if $p_{\max}<1/4$.
Here note that
$p_{\max}=\max_{x^n\in\mathcal{X}^n}P_{X_2^n|X_1}(x_2^n|x_1)$
itself is also a value exponentially small in $n$
for usual distributions $P_{X^n}$.
The last part shows that the max entropy is bounded by a constant
independent of $n$.
This means that when a DVFH code is applied to
the generation of biased codewords,
the block decoding error probability is
as most a constant times the error probability under the ideal codeword distribution
as explained in the discussion around \eqref{def_gap}.

\begin{algorithm}[t]
\DontPrintSemicolon
\SetKwInput{Input}{Input}
\SetKwInput{Parameter}{Parameter}
\Input{Uniform message $u^{\infty}\in \{0,1\}^{\infty}$.}
$t_{(1)}:=1,\,k:=0,\,I:=[0,1)$.\;
\For{$j=1,2,\cdots$}{
 $\tx_1:=\sigma(k;\{P_X(i)\}_i)$.\; \label{enc_x}
 $r:=0.u_{t_{(j)}}u_{t_{(j)}+1}\cdots$ and $\tx_2^n:=F^{-1}_{\tx_1}(r)$.\;\label{runif}
 $I':=I\cap [F_{\tx_1}(\tx_2^n-1),\,F_{\tx_1}(\tx_2^n))$.\;\label{enc_same1}
 $l_0:=l(I')$.\;
 $I'_i:=I'\cap [\fl{\uI'}_{l_0}+i2^{-l_0},\,\fl{\uI'}_{l_0}+(i+1)2^{-l_0})$ for
$i\in\cX$.\;
\label{enc_same2}
 Set
$i^* \in \cX$
as the index s.t. $r \in I'_{i^*}$.\;
 $l_1:=\lfloor -\log |I'_{i^*}|\rfloor$.\;
 $I:=[\inner{\uI_{i^*}'}_{l_1},\inner{\oI_{i^*}'}_{l_1})$.\;\label{I_kousin}
 $k:=\sigma^{-1}(i^*; \{I'_{i}\}_i)$.\;\label{enc_k}
 $t_{(j+1)}:=t_{(j)}+l_1$.\;
 Output
$\tx_{(j)}^n:=\tx^n$.\;
}
\caption{Encoding of a DVFH Code}\label{enc_dvfh}
\end{algorithm}%
\begin{algorithm}[t]
\DontPrintSemicolon
\SetKwInput{Input}{Input}
\SetKwInput{Parameter}{Parameter}
\Input{Received sequences $\tx_{(1)}^n,\,\tx_{(2)}^n,\cdots\in \{0,1\}^n$.}
\For{$j=1,2,\cdots$}{
 \If{$j\ge 2$}{
  $k:=\sigma^{-1}(\tx_{1,(j)};\{P_X(i)\}_i)$.\;\label{dec_k}
  $i^*:=\sigma(k,\{I_i'\}_i)$.\;\label{dec_x}
  $l_1:=\lfloor -\log |I'_{i^*}|\rfloor$.\;
  $I:=[\inner{\uI_{i^*}'}_{l_1},\inner{\oI_{i^*}'}_{l_1})$.\;\label{dec_I}
  Output $\hat{u}_{\hat{t}_{(j-1)}}^{\hat{t}_{(j-1)}+l_1-1}:=\fl{\uI'_{i^*}}_{l_1}$.\;\label{out_dec}
  $\hat{t}_{(j)}:=\hat{t}_{(j-1)}+l_1$.\;
 }
 $\tx^n:=\tx_{(j)}^n$.\;
 $I':=I\cap [F_{\tx_1}(\tx_2^n-1),\,F_{\tx_1}(\tx_2^n))$.\;\label{dec_same1}
 {\bf if $I'=\emptyset$ then $I':=[F_{\tx_1}(\tx_2^n-1),\,F_{\tx_1}(\tx_2^n))$}.\;\label{dec_exp}
 $l_0:=l(I')$.\;
 $I'_i:=I'\cap [\fl{\uI'}_{l_0}+i2^{-l_0},\,\fl{\uI'}_{l_0}+(i+1)2^{-l_0})$ for
$i\in\cX$.\;\label{dec_same2}
}
\caption{Decoding of a DVFH Code}\label{dec_dvfh}
\end{algorithm}

Unfortunately, we do not have a theoretical guarantee on the
average input length of a DVFH code.
This is because $r$ at each iteration is uniformly distributed
over an interval $I\subset [0,1)$ rather than $[0,1)$.
For example,
if $I=[0,b)$ for some $b>0$ then
the distribution of $\tilde{x}_2^n$ becomes
{\allowdisplaybreaks[0]%
\begin{align}
\lefteqn{
P_{\tilde{X}_2^n|\tilde{X}_1}(\tilde{x}_2^n|\tx_1)
}\nn
&=
\begin{cases}
\frac{1}{b}P_{\tilde{X}_2^n|\tilde{X}_1}(\tilde{x}_2^n|\tx_1),&
\tx_2^n \precneqq F_{\tx_1}^{-1}(\tx_2^n),
\\
\frac{b-F_{\tx_1}(\tx_2^n-1)}{b},&
\tx_2^n = F_{\tx_1}^{-1}(\tx_2^n),
\\
0,&\mbox{otherwise.}
\end{cases}\n
\end{align}}%
This problem can be avoided by, for example, adding a shared
common random number $r_{(j)}'\in[0,1)$ to $r$ at each iteration,
which makes $r$ uniformly distributed over $[0,1)$.
The problem can also be avoided by
replacing the cumulative conditional distribution $F_{\tx_1}(\tx_2^n)$
with
$\uI+(\oI-\uI)F_{x_1}(\tx_2^n)$,
which is the technique used in \cite{homophonic_isita}.
However, this modification causes error propagation despite the use of a VF code
since $I$ for the current loop depends on all the sequences sent
in the previous loops.

In this paper we propose a modification of the code
to guarantee
a lower bound on
the average input length
without introducing common randomness and
causing error propagation.
This uses
a ``shifted'' cumulative distribution
given $x_1$
where an appropriately fixed value
$s_{x_1}\in [0,1)$ is added
under the mod 1 operation
(see
\eqref{def_smin} and
\eqref{def_sx} in the proof of Theorem \ref{thm_performance}
for the specific value of $s_{x_1}$),
which is expressed as
\begin{align}
F_{x_1}(x_2^n; s_{x_1})
&=
\inner{F_{x_1}(x_2^n)-s_{x_1}}\com\nn
F_{x_1}^{-1}(r;s_{x_1})
&=F_{x_1}^{-1}(\inner{r+s_{x_1}})\per
\n
\end{align}

\begin{theorem}\label{thm_performance}
Consider the homophonic code where
$F_{\tx_1}(\cdot)$ and $F_{\tx_1}^{-1}(\cdot)$
are replaced with 
$F_{\tx_1}(\cdot\,; s_{\tx_1})$ and
$F_{\tx_1}^{-1}(\cdot\, ;s_{\tx_1})$, respectively,
Step \ref{enc_x} in the encoding is replaced with
$\tilde{x}_1:=i^*$ (with an arbitrary initial value $i^*\in\mathcal{X}$),
and
Step \ref{dec_x} in the decoding is replaced with
$\tilde{x}_1:=\tx_{1,(j)}$.
Then (i) and (ii) of Theorem \ref{thm_decodable} still hold
and the max divergence satisfies
\begin{align}
D_{\phi}
&\le
\log \left\{
\frac{2}{\min_{x\in \cX}P_{X_1}(x)}
\right\}\per\n
\end{align}
Furthermore, there exists $\{s_{x_1}\}_{x_1}\in [0,1)^{|\mathcal{X}|}$
such that
the average input length $l_1=t_{(j+1)}-t_{(j)}$
for any $j\in\mathbb{N}$
satisfies
\begin{align}
\E[l_1]
>
\min_{x\in\mathcal{X}}H(X_2^n|X_1=x)-\log((|\mathcal{X}|-1)/2)
\per
\label{bound_rate}
\end{align}
\end{theorem}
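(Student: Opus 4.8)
The plan is to prove the two assertions of Theorem~\ref{thm_performance} separately: first that parts (i) and (ii) of Theorem~\ref{thm_decodable} together with the max-divergence bound carry over under the shift modification, and second that a suitable choice of $\{s_{x_1}\}_{x_1}$ yields the claimed lower bound on $\E[l_1]$.

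For the first part I would observe that the shift $F_{x_1}(\cdot\,;s_{x_1})$ is just a cyclic rotation of $[0,1)$, so it only relabels which sequence $\tilde{x}_2^n$ corresponds to a given $r$; the interval-nesting structure of Algorithm~\ref{enc_dvfh} (the passage from $I'$ to $I'_{i^*}$ to the rescaled $I$ via $\inner{\cdot}_{l_1}$) is untouched, and the decoder in Algorithm~\ref{dec_dvfh} inverts the same rotation because it knows $s_{\tx_1}$ once $\tx_1$ is read. Hence unique decodability (part (i)) and the error-propagation bound \eqref{prob_prop} (part (ii)), whose proofs presumably rely only on the nesting geometry and on $p_{\max}$, go through verbatim. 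For the max-divergence, the earlier bound \eqref{bound_dist} had the factor $(i+1)\max^{(i)}_x P_{X_1}(x)$ coming from how often index $k=i$ is chosen; since Step~\ref{enc_x} is now replaced with $\tilde x_1:=i^*$ directly (the permutation $\sigma$ no longer reorders $X_1$), the worst case over $i$ is governed simply by $\min_x P_{X_1}(x)$, giving $D_\phi\le\log\{2/\min_{x\in\cX}P_{X_1}(x)\}$.

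For the length bound, the key point is that $l_1=\lfloor-\log|I'_{i^*}|\rfloor$, so $\E[l_1]>\E[-\log|I'_{i^*}|]-1$, and I need to lower-bound $\E[-\log|I'_{i^*}|]$. Since $I'_{i^*}$ is a subinterval of $[F_{\tx_1}(\tx_2^n-1),F_{\tx_1}(\tx_2^n))$ of length at most $2^{-l_0}$, and $\tx_2^n$ is (modulo the shift) distributed so that $P_{X_2^n|X_1}(x_2^n|x_1)$ equals the length of that containing interval, one gets $-\log|I'_{i^*}|\ge -\log P_{X_2^n|X_1}(\tx_2^n|\tx_1)$ plus a correction. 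Taking expectations would give $H(X_2^n|X_1=x)$ if $\tx_2^n$ exactly followed $P_{X_2^n|X_1}(\cdot|x)$; it does not, because $r$ is uniform only over the \emph{shifted} copy of $I$ rather than over $[0,1)$. This is where the choice of $s_{x_1}$ enters: the plan is to pick $s_{x_1}$ so that the ``boundary'' sequence $F_{\tx_1}^{-1}(\inner{s_{\tx_1}})$ — the one whose probability under the induced distribution gets distorted, as in the displayed $P_{\tilde X_2^n|\tilde X_1}$ formula — is a low-probability sequence, so that the distortion costs little. Concretely I expect one chooses $s_{\min}$ (cf.\ \eqref{def_smin}) aligned with a minimal-probability sequence and sets $s_{x_1}$ so the interval $I$ inherited from the previous block always starts at a point where the cumulative distribution has a small jump; then a careful accounting of the worst-case reshuffling of probability mass among the $|\cX|$ sub-intervals $I'_0,\dots,I'_{|\cX|-1}$ produces the additive loss $\log((|\cX|-1)/2)$.

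The main obstacle will be this last accounting: controlling how much $\E[-\log|I'_{i^*}|]$ can fall below $H(X_2^n|X_1=x)$ when $r$ ranges over a shifted subinterval $I$ of possibly tiny length, uniformly over all reachable states $I$ across iterations. The argument needs the $s_{x_1}$ to be chosen once and for all (not per-iteration, to avoid common randomness and error propagation), yet to work for \emph{every} interval $I$ that can arise, so one must show the distortion is bounded by a quantity depending only on $|\cX|$ — essentially that partitioning any interval into $|\cX|$ dyadic-aligned pieces and picking the piece containing a uniform point loses at most $\log((|\cX|-1)/2)$ bits relative to the ideal split. I would handle this by a convexity/worst-case argument over the $|\cX|$-tuple of sub-interval lengths, showing the expected log-length is minimized (worst) when the mass is as unbalanced as the constraint $|I'_i|\le 2^{-l_0}$ and $\sum_i|I'_i|=|I'|$ permits, and then plugging in the definition of $l_0$ from \eqref{cond_li}.
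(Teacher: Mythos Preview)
Your treatment of (i), (ii), and the max-divergence bound is fine and matches the paper. For the length bound, however, there are two genuine gaps.

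First, the constant $\log((|\cX|-1)/2)$ does not arise from the partition of $I'$ into $|\cX|$ pieces as you propose. It comes directly from the definition \eqref{cond_li}: since $l_0=l(I')$ is the \emph{largest} $l$ satisfying \eqref{cond_li}, substituting $l_0+1$ yields $\oI'>\uI'+(|\cX|-1)\,2^{-(l_0+1)}$, hence $l_0>-\log|I'|+\log((|\cX|-1)/2)$. Combined with $l_1\ge l_0$ (from $|I'_{i^*}|\le 2^{-l_0}$) and $|I'|\le P_{X_2^n|X_1}(\tx_2^n|\tx_1)$, this already gives $l_1>-\log P_{X_2^n|X_1}(\tx_2^n|\tx_1)+\log((|\cX|-1)/2)$. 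The remaining task is therefore only to show $\E[-\log P_{X_2^n|X_1}(\tilde X_2^n|\tx_1)]\ge H(X_2^n|X_1=\tx_1)$; your convexity/worst-case argument over the sub-interval lengths $|I'_i|$ is not needed and would not produce this constant.

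Second, and more seriously, you miss the structural reason a \emph{fixed} shift $s_{x_1}$ can handle every reachable $I$. The modification $\tx_1:=i^*$ ties the shape of the next-iteration interval $I$ to the value of $\tx_1$: for $i^*\in\{1,\dots,|\cX|-2\}$ the cell $I'_{i^*}$ is a full dyadic interval of length $2^{-l_0}$, so after rescaling $I=[0,1)$ exactly; for $i^*=0$ one always gets $I=[a,1)$ with $a<1/2$; for $i^*=|\cX|-1$ one always gets $I=[0,b)$ with $b>1/2$. Thus only $s_0$ and $s_{|\cX|-1}$ matter, and each faces only a one-parameter family of intervals rather than ``all reachable $I$.'' The correct choice is not to align with a minimal-probability sequence but, for $x_1=0$, to take $s_0=\argmax_s\int_0^s(f(r)-\bar f)\,\rd r$ with $f(r)=-\log P_{X_2^n|X_1}(F_0^{-1}(r)\,|\,0)$ and $\bar f=H(X_2^n|X_1=0)$; the extremality of $s_0$ then guarantees, for every $a\in[0,1/2)$, that the average of $f$ over the cyclically shifted image of $[a,1)$ is at least $\bar f$ (and analogously $s_{|\cX|-1}$ is the $\argmin$). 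Without recognizing this three-shape structure of $I$, your plan to fix $s_{x_1}$ once and have it work for all reachable $I$ has no footing.
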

The value $s_{x_1}$ that assures \eqref{bound_rate}
is difficult to compute in practice
(although we only have to compute this value once as preprocessing). 
Nevertheless,
this theorem may become one reason
for the near optimal empirical performance
of the DVFH code (without modification) shown in the next section.

As shown in the above theorem,
the average input length of the modified code
is bounded in terms of
$\min_{x\in\mathcal{X}}\allowbreak H(X_2^n|X_1=x)$.
It must be noted that
this value
is much smaller than $H(X^n)$ or
$H(X_2^n|X_1)$
for some $P_{X^n}$,
although 
$X_2^n$ does not heavily depend on $X_1$
in most ``good'' block codes.
It is an important future work to devise a VF homophonic code
which provably achieves the average input lengths $H(X^n)-\lo(1)$
for general distributions $P_{X^n}$.

\begin{remark}{\rm
In a DVFH code
$(\log|\cX|)$-bits of additional information is assigned to $X_{1,(j-1)}$.
We can also consider a code
such that
the additional information
is assigned to $X_{n,(j)}$
by switching the role of $X_{1,(j+1)}$ to $X_{n,(j)}$.
Such a code does not suffer a decoding delay, but
theoretical guarantees
hold in terms of
$\inf_{x_1^{n-1}\in\cX^{n-1}}P_{X_n|X_1^{n-1}}(x|x_1^{n-1})$
instead of $P_{X_1}(x)$  in, e.g., \eqref{bound_dist}.
Thus it is inappropriate to use this code
to generate sequences which heavily depend on past sequences.
However it can be a promising candidate
when $X_1^n$ is i.i.d.~or a Markov process of order $k\ll n$,
which is the case when a homophonic code is applied
to the framework of channel coding in \cite{howto_capacity}.
}\end{remark}

\begin{remark}\label{remark_noniid}{\rm
In the case where
$X_{1,(1)}^n,X_{1,(2)}^n,\cdots$ are not i.i.d., we can obtain
a similar theoretical guarantee by replacing
$P_{X_2^n|X_1}=P_{X_{2,(j)}^n|X_{1,(j)}}$ in the algorithms with
$P_{X_{2,(j)}^n|X_{1,(j)},\{X_{1,(j')}^n\}_{j'=1}^{j-1}}$,
if
\begin{align}
\lefteqn{
\!\!
\inf_{j\in\mathbb{N}}\min_{
 (x_{1,(j)},\,\{x_{1,(j')}^n\}_{j'=1}^{j-1})\in\cX^{1+n(j-1)}
}}\nn
&\qquad\qquad\qquad
 P_{X_{1,(j)}^n|
 \{X_{1,(j')}^n\}_{j'=1}^{j-1}
}
(x_{1,(j)}|\{x_{1,(j')}^n\}_{j'=1}^{j-1})>0\per\n
\end{align}
However, error propagation inevitably occurs
in the coding framework using probabilities
depending on all the past generated blocks.
Thus it is realistic to
set the target probability distribution
to be independent between blocks.
}\end{remark}

\section{Numerical Results}
In this section we first
compare a DVFH code
with other homophonic codes
for i.i.d.~target distributions.
We next apply the DVFH code to polar coding for asymmetric channels.
We used a DVFH code without introducing the modification
considered in Theorem \ref{thm_performance},
which means that
the theoretical guarantee on the coding rate in \eqref{bound_rate} does not hold.

\begin{figure}[t]
  \begin{center}
   \includegraphics[bb=140 205 498 620,clip,angle=270,width=60mm]{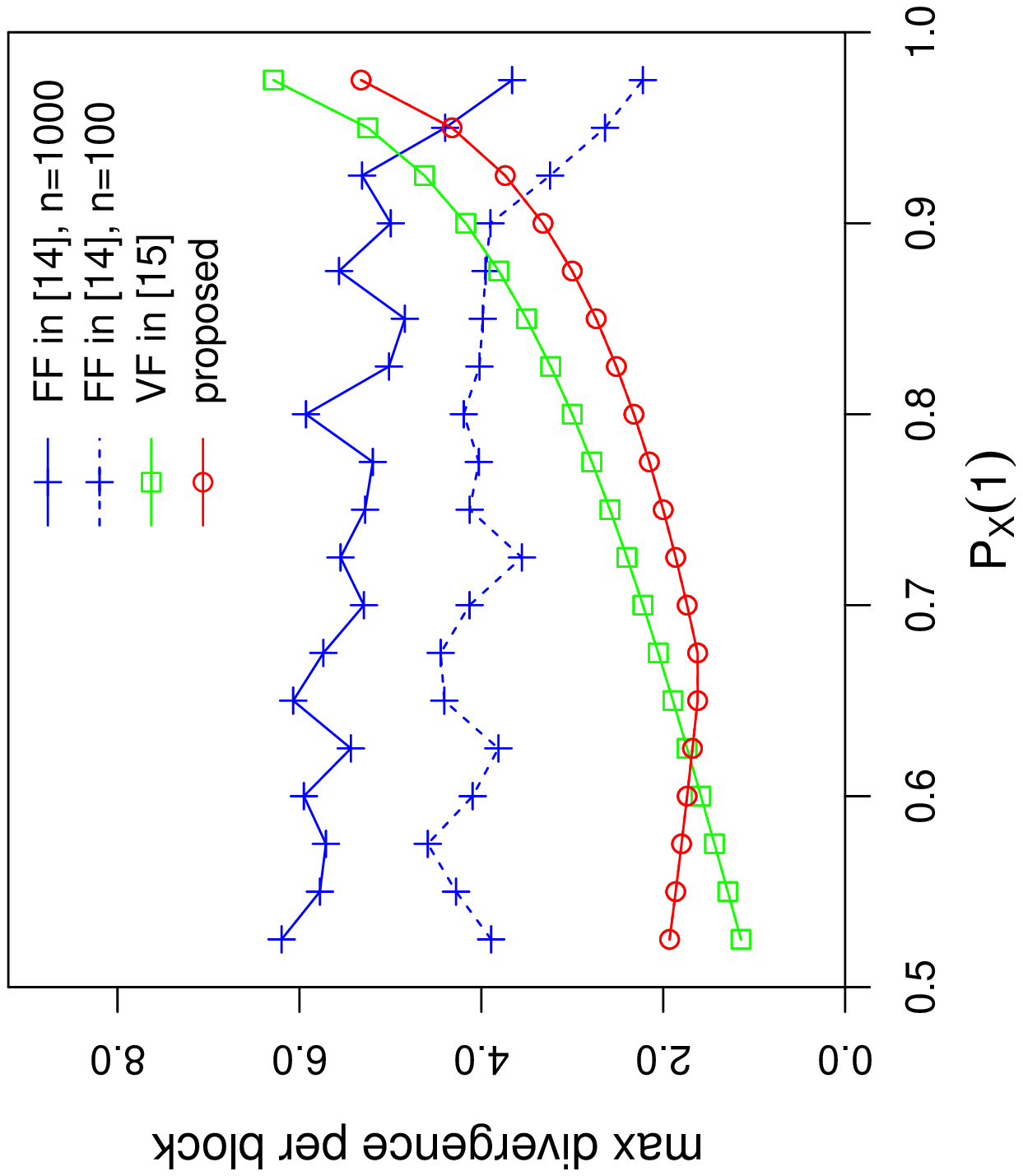}
  \end{center}\vspace{-2.5mm}
  \caption{Max-divergence of homophonic codes for i.i.d.~binary sequences.}\vspace{-3mm}
  \label{fig1}
  \begin{center}
   \includegraphics[bb=115 198 498 620,clip,angle=270,width=60mm]{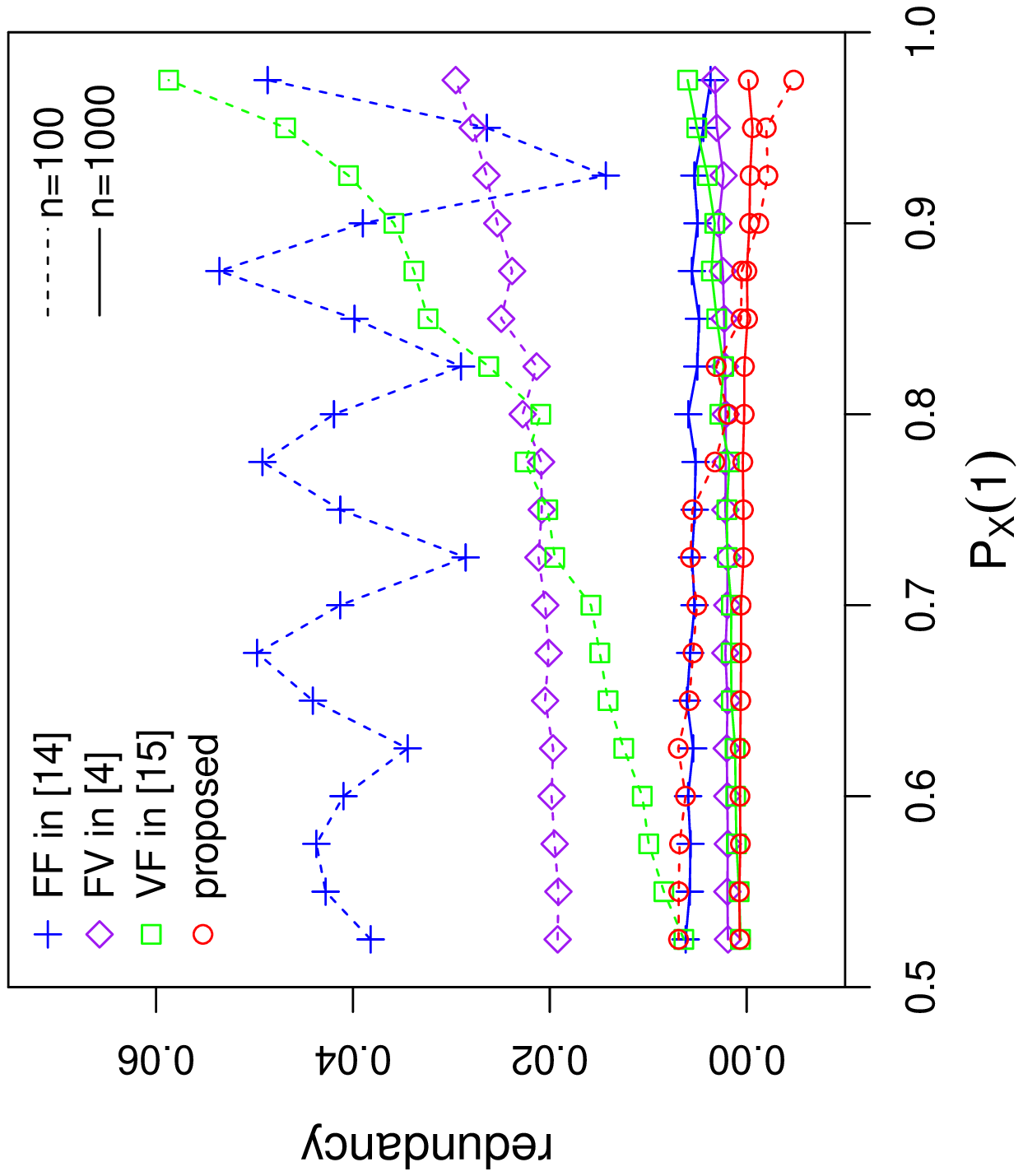}
  \end{center}\vspace{-2.5mm}
  \caption{Redundancy of homophonic codes for i.i.d.~binary sequences.}
  \label{fig2}
\end{figure}%
Fig.~\ref{fig1} is the
upper bound\footnote{We considered the theoretical upper bound instead of
the empirical results for this point since it requires prohibitively large number of samples
to estimate a distribution over $\mathcal{X}^n$.}
 on max-divergence $D_{\phi}$ between
the generated sequences and the target probability.
The plots\footnote{%
The plots of the FF code in this figure
is slightly different from the ISIT version that contained a bug.}
of the FF code in \cite{homophonic_ff} are the exact ones rather than upper bounds.
The sequences generated by the FV code in \cite{homophonic_interval} exactly
follow the target distribution and its plot is not shown in the figure.
Max divergences per block
of the code in \cite{homophonic_isita} and the proposed one are bounded independent of $n$
whereas that of the FF code in \cite{homophonic_ff} is
$\Theta(\log n)$.
Fig.~\ref{fig2} shows the redundancy of the average coding rate which is
$\E[L_{\mathrm{in}}]/n-H(X)$ for VF and FF codes
where $L_{\mathrm{in}}$ is the input length.
For the FV code in \cite{homophonic_interval},
the redundancy is given by
$m/\E[L_{\mathrm{out}}]-H(X)$
where $m$ and $L_{\mathrm{out}}$ are the input and output length, respectively,
and we set $m=\lfloor nH(X)\rfloor$ so that the output length
becomes roughly the same as the FF and VF codes.
Each plot is the average over 10,000 sequential encoding.

As we can see from these figures
the DVFH code achieves a comparable divergence
for most target distributions,
whereas the redundancy is almost zero or below even for
shorter block length.
Here the negative redundancy of the DVFH code
does not contradict the Shannon bound
since the distribution of the generated sequence is slightly different from
the target distribution.

\begin{figure}[t]
  \begin{center}
   \includegraphics[bb=140 205 498 620,clip,angle=270,width=60mm]{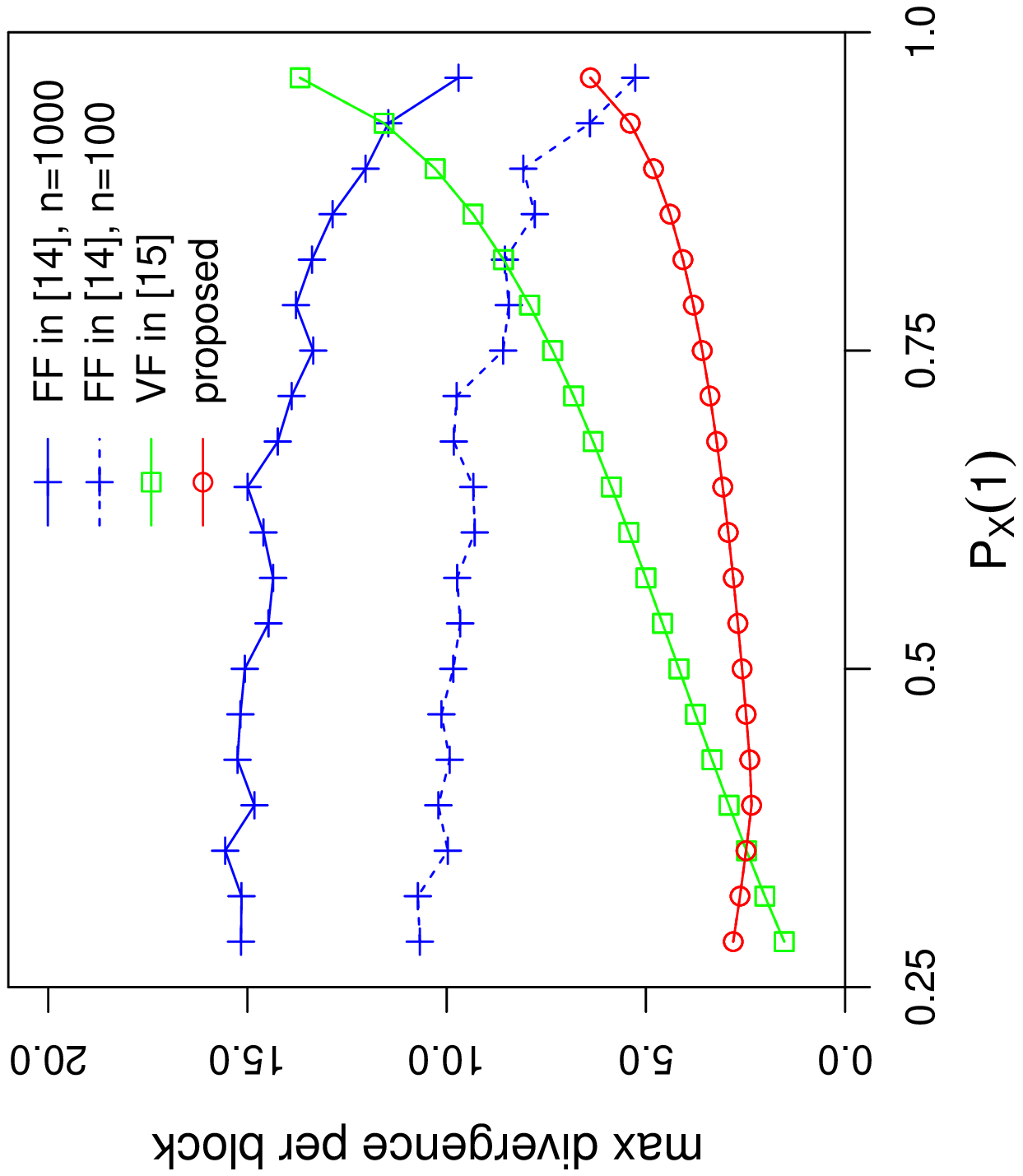}
  \end{center}\vspace{-2.5mm}
  \caption{Max-divergence of homophonic codes for i.i.d.~quaternary sequences.%
}\vspace{-3mm}
  \label{fig41}
  \begin{center}
   \includegraphics[bb=115 198 498 620,clip,angle=270,width=60mm]{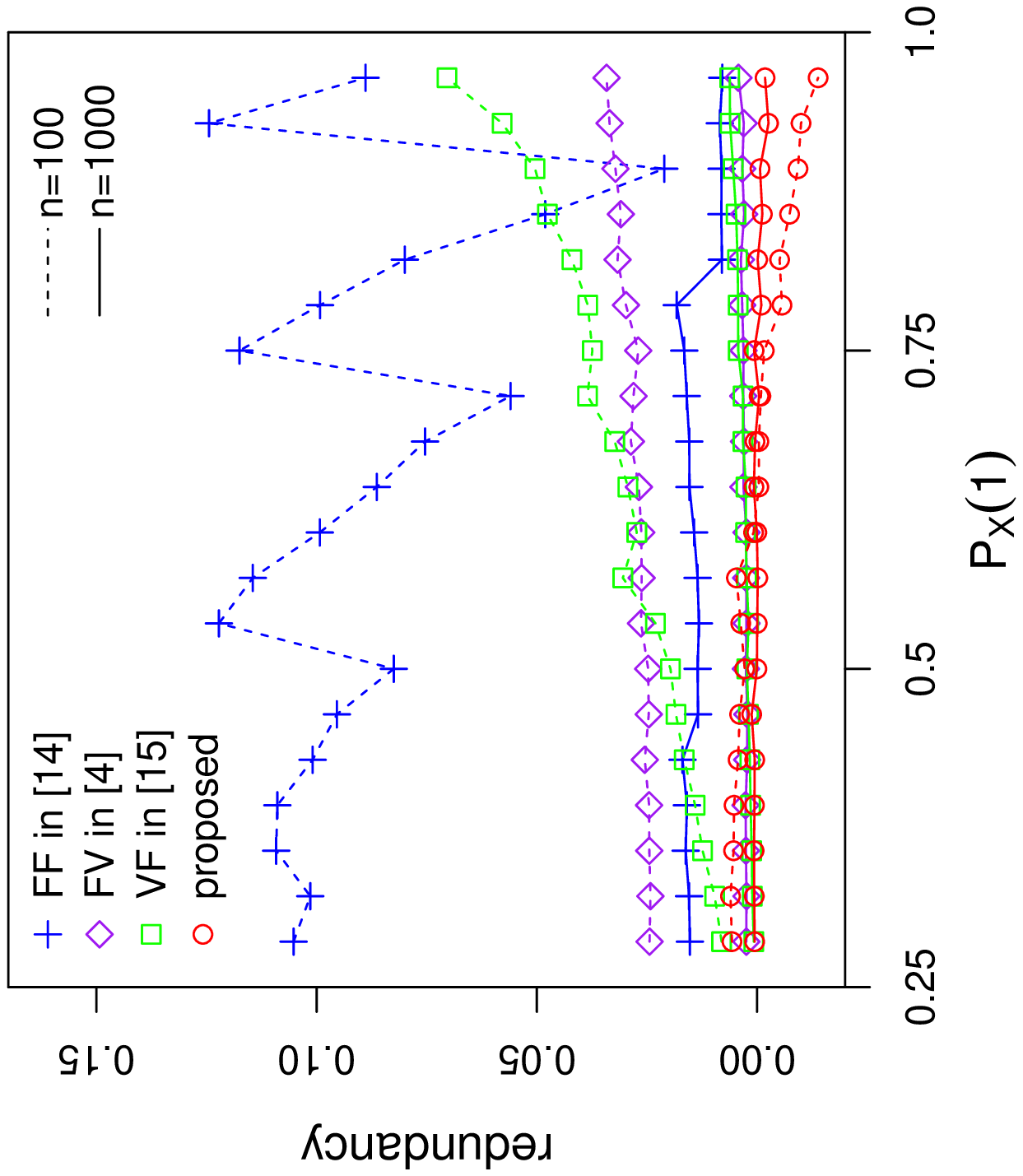}
  \end{center}\vspace{-2.5mm}
  \caption{Redundancy of homophonic codes for i.i.d.~quaternary sequences.%
}
  \label{fig42}
\end{figure}%
Next, Figs.~\ref{fig41} and \ref{fig42} show the
upper bound
on max-divergence $D_{\phi}$
and the redundancy of the average coding rate
for i.i.d.~quaternary sequences with
$P_X(2)=P_X(3)=P_X(4)=(1-P_X(1))/3$, respectively.
A tendency similar to the binary case can be seen from these figures.

Finally we consider an application of homophonic coding to polar codes for asymmetric channels,
where capacity-achieving polar coding schemes are proposed in \cite{polar_honda_trans}
and \cite{wang_memory}.
Whereas the former one
is an FF code using a polar code for lossless compression to realize the optimal input distribution,
the latter one is an FV code based on the FV homophonic code in \cite{homophonic_interval}
which is practically unrealistic because of the problem of error propagation.
Based on this observation we compared the FF length code in \cite{polar_honda_trans}
with the scheme in \cite{wang_memory}
by replacing the FV homophonic code with the DVFH code.
Note that the FF homophonic code in \cite{homophonic_ff} and
the VF homophonic code in \cite{homophonic_isita}
are hard to apply since the target distribution
is not i.i.d.

We considered AWGN channels with 4ASK modulation where input points
are given by $X\in \{-3a,-a,+a,+3a\}$ for $a>0$.
The Signal-to-Noise Ratio (SNR) is set to 10db.
The optimal input distribution is
$(P_X(\pm a),P_X(\pm 3a))=(0.33,\,0.17)$.
The mutual information $I(X;Y)$ between the input and output is
1.582 and 1.628 for the uniform and the above input distribution, respectively.
Note that it is also possible
to optimize the input points as well as the input distribution but
we only considered optimization of the latter one for practicality.
We used polar codes
over $\mathcal{X}=\mathrm{GF}(4)$,
with block lengths $2^9,\,2^{11}$ and $2^{13}$.

\begin{figure}[t]
  \begin{center}
   \includegraphics[bb=130 205 500 615,clip,angle=270,width=60mm]{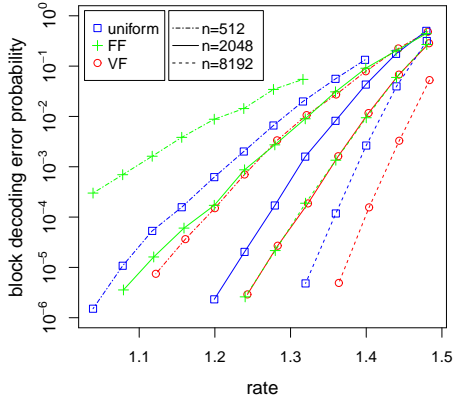}
  \end{center}\vspace{-2.5mm}
  \caption{Block decoding error probability of the original polar code with uniform input,
the capacity achieving FF one in \cite{polar_honda_trans}
and the one in \cite{wang_memory} where the FV homophonic code in \cite{homophonic_interval}
is replaced with the proposed VF code.}
  \label{fig3}
\end{figure}%
Fig.~\ref{fig3} shows the decoding error probabilities of the above two polar coding schemes and the
original polar code with the uniform input distribution.
Here note that
there is some arbitrariness on the performance measure
of the scheme using a DVFH code.
First, decoding error propagates to roughly $3/4$ blocks
(in the 4-ary case), that is, one decoding error of this code
corresponds roughly to $7/4$-block errors of other codes.
Second, the input length for each codeword is variable
and there exists a correlation between the message length
and the error probability of the codeword.
Based on this observation,
we plotted twice the empirical decoding error probability to conservatively
evaluate the scheme using a DVFH code.

As we can see from the figure
the performance of the FF code in \cite{polar_honda_trans} 
is much worse than the performance of
the uniform input polar code
for moderate block lengths,
although it is theoretically assured to be better than the uniform input
asymptotically.
On the other hand, the VF polar code using the DVFH code
significantly outperforms the polar code with the uniform input.

\section{Proof of Theorem \ref{thm_performance}}\label{sec_proof}
In this section we prove
Theorems \ref{thm_decodable}
and \ref{thm_performance}.
We prove these theorems based on the following lemma.
\begin{lemma}\label{lem_unif}
At Step \ref{runif} of the encoding,
$|I|> 1/2$ holds and
$r$ is uniformly distributed over $I$
given $\tx_{(1)}^n,\,\tx_{(2)}^n,\cdots,\tx_{(j-1)}^n$.
\end{lemma}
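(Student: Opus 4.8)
The plan is to prove the lemma by induction on the block index $j$, propagating a slightly stronger invariant about the real number $r^{(j)}=0.u_{t_{(j)}}u_{t_{(j)}+1}\cdots$ formed at Step~\ref{runif} of iteration $j$. For a fixed run of the encoder write $I_j$, $I'_j$, $\{I'_{j,i}\}_i$, $i^*_j$, $l_{1,j}$ for the quantities $I$, $I'$, $\{I'_i\}_i$, $i^*$, $l_1$ produced in iteration $j$, and let $B_j:=I'_{j,i^*_j}$ be the bin finally selected, so that $l_{1,j}=\lfloor-\log|B_j|\rfloor$ and $t_{(j+1)}=t_{(j)}+l_{1,j}$. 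The invariant I would maintain is: conditioned on the already-emitted blocks $\tx^n_{(1)},\cdots,\tx^n_{(j-1)}$ together with the encoder's internal state $(I_j,k_j)$ at the start of iteration $j$ — equivalently, with the value of the symbol $\tx_1$ about to be emitted — the variable $r^{(j)}$ is uniform on $I_j$ and $|I_j|>1/2$. The lemma is then obtained by marginalizing this over $k_j$, since the conditional law $\mathrm{Unif}(I_j)$ does not depend on $k_j$ while $I_j$ is measurable with respect to the coarser conditioning. The base case $j=1$ is immediate: $I_1=[0,1)$ is deterministic and $r^{(1)}=0.u_1u_2\cdots$ is uniform on $[0,1)$ because $u^{\infty}$ is an i.i.d.\ fair bit sequence.

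For the inductive step I would push the uniform law of $r^{(j)}$ through the successive refinements of $I_j$ carried out in iteration $j$. Because $\tx_1$ of that iteration is a function of the conditioning, $\tx^n_2=F^{-1}_{\tx_1}(r^{(j)})$ merely records into which part of the partition of $I_j$ induced by $F^{-1}_{\tx_1}$ the point $r^{(j)}$ falls; conditioning a uniform variable on the part containing it keeps it uniform, so once $\tx^n_2$ is revealed $r^{(j)}$ is uniform on $I'_j$, and, applying the same observation to the finer partition $\{I'_{j,i}\}_i$, once $i^*_j$ is also revealed $r^{(j)}$ is uniform on $B_j$. The $\sigma$-algebra relevant to iteration $j+1$ — generated by the emitted blocks $\tx^n_{(1)},\cdots,\tx^n_{(j)}$ and the new state $(I_{j+1},k_{j+1})$ — lies between the conditioning ``$(\tx^n_{(1)},\cdots,\tx^n_{(j-1)},I_j,k_j)$ enlarged by $\tx^n_2$'' and that conditioning enlarged further by $i^*_j$ (the rank $k_{j+1}$ together with the known bins $\{I'_{j,i}\}_i$ recovers $i^*_j$, which is precisely the reconstruction the decoder performs at Steps~\ref{dec_k}--\ref{dec_x}), and $B_j$ is measurable with respect to this $\sigma$-algebra; hence $r^{(j)}$ stays uniform on $B_j$ under it. Finally $l_{1,j}$ is measurable with respect to it as well, and $r^{(j+1)}=\inner{r^{(j)}}_{l_{1,j}}$ because $t_{(j+1)}=t_{(j)}+l_{1,j}$, so it remains only to check that deleting the first $l_{1,j}$ binary digits pushes $\mathrm{Unif}(B_j)$ forward to $\mathrm{Unif}(I_{j+1})$, where $I_{j+1}$ is the interval assembled at Step~\ref{I_kousin}.

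I expect the geometric fact behind that last check to be the main obstacle: that $B_j$ is contained in a single dyadic interval of level $l_{1,j}=\lfloor-\log|B_j|\rfloor$, i.e.\ all points of $B_j$ share the same first $l_{1,j}$ binary digits. Granting it, $\inner{\cdot}_{l_{1,j}}$ restricted to $B_j$ is an affine, length-rescaling bijection onto the interval $I_{j+1}$ built at Step~\ref{I_kousin}, so it carries $\mathrm{Unif}(B_j)$ to $\mathrm{Unif}(I_{j+1})$; and $|I_{j+1}|=2^{l_{1,j}}|B_j|\in(1/2,1]$ follows at once from the definition of $l_{1,j}$, giving both the desired uniformity and $|I_{j+1}|>1/2$. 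To prove the fact I would first note that every bin $I'_{j,i}$ lies inside a dyadic interval of level $l(I'_j)$, so $|I'_{j,i}|\le 2^{-l(I'_j)}$ and hence $l_{1,j}\ge l(I'_j)$; then I would examine the three possible types of nonempty bin — an interior one, which is itself a dyadic interval of level $l(I'_j)$; the leftmost nonempty one, whose right endpoint is a multiple of $2^{-l(I'_j)}$, hence of $2^{-l_{1,j}}$; and the rightmost nonempty one, whose left endpoint is such a multiple — and use that an interval of length at most $2^{-l_{1,j}}$ having an endpoint at a multiple of $2^{-l_{1,j}}$ cannot cross a level-$l_{1,j}$ boundary. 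The maximality built into the definition~\eqref{cond_li} of $l(I'_j)$ rules out, when $|\cX|\ge 2$, the degenerate case where a single bin already fills $I'_j$, so these three types are exhaustive.
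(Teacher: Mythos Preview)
Your argument follows the same inductive strategy as the paper's proof, but supplies considerably more detail; in particular, the paper simply asserts that $0.u_{t_{(j)}+l_1}u_{t_{(j)}+l_1+1}\cdots$ is uniform over $[\inner{\uI'_{i^*}}_{l_1},\inner{\oI'_{i^*}}_{l_1})$ without justifying that the selected bin $B_j$ lies in a single dyadic interval of level $l_{1,j}$, which is exactly the point you isolate and verify via the leftmost/interior/rightmost case analysis together with the maximality of $l(I')$.

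One slip: the sentence ``while $I_j$ is measurable with respect to the coarser conditioning'' is not correct as written.  The interval $I_j$ is determined by $i^*_{j-1}$, which is \emph{not} recoverable from $\tx_{(1)}^n,\dots,\tx_{(j-1)}^n$ alone (it is recovered only once $\tx_{1,(j)}$---equivalently $k_j$---is revealed).  This does not damage the proof: your stronger invariant (uniformity given the full internal state $(I_j,k_j)$) is precisely what is used in the subsequent applications of the lemma in the paper, and the lemma as stated should really be read that way.  The marginalization you want is simply the tower property over $k_j$ with $I_j$ still in the conditioning, not with $I_j$ removed.
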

\begin{proof}
$|I|>1/2$ straightforwardly follows from
Steps \ref{enc_same2}--\ref{I_kousin}.
The latter part is proved by induction.
For $j=1$ this proposition holds from $I=[0,1)$ and the uniformity of
$(u_1,u_2,\cdots)$.
Next, assume that the proposition holds for $j\le j_0-1$.
Then, given $\tx_{(j_0)}^n=\tx^n$,
$r$ is uniformly distributed over
$I'$ in Step \ref{enc_same1}.
Thus, $0.u_{t_{(j)}+l_1}u_{t_{(j)}+l_1+1}\cdots$
is uniformly distributed over $I:=[\inner{\uI_{i^*}'}_{l_1},\inner{\oI_{i^*}'}_{l_1})$,
which implies that the proposition holds for $j=j_0$.
\end{proof}

\begin{proof2}{of Theorem \ref{thm_decodable}}
(i)
Construction of $I',l_0,\,\{I_i'\},\,l_1$ and $I$ is the
same between the encoding and the decoding, provided that
$i^*$ is the same between them.
In addition,
from relation between
Steps \ref{enc_k} and \ref{enc_x} in the encoding and
Steps \ref{dec_k} and \ref{dec_x} in the decoding,
$i^*$ for each $j=j_0$ in the encoding is correctly recovered in the decoding
at Step \ref{dec_x} for $j=j_0+1$.
Since $u_{t_{(j-1)}}^{t_{(j-1)}+l_1-1}=\fl{\uI'_{i^*}}_{l_1}$ holds from $r\in I_{i^*}'$,
$u_{t_{(j-1)}}^{t_{(j-1)}+l_1-1}$
is correctly recovered by Step \ref{out_dec} in the decoding,

(ii)
A decoding error
$u_{t_{j+k}}^{t_{j+k+1}-1}\neq\allowbreak\hat{u}_{\hat{t}_{j+k}}^{\hat{t}_{j+k+1}-1}$
occurs only if
$I'$ after Step \ref{enc_same1} of the encoding and
$I'$ after Step \ref{dec_exp} of the decoding are different for $j=j_0+k$.
From Lemma \ref{lem_unif}, $I'\neq [F_{\tx_1}(\tx_2^n-1),\,F_{\tx_1}(\tx_2^n))$
occurs in the encoding with probability at most
$\max_{x_2^n}P_{X_2^n|X_1}(x_2^n|\tx_1)/{|I|}
\le
2p_{\max}$
given that $I'$ are different between the encoding and decoding for $j=j_0+k-1$.
The same result holds for the decoding and therefore
$I'$ are different to each other with probability at most $4p_{\max}$,
which proves \eqref{prob_prop}.

(iii)
For each $i_0\in\mathcal{X}$,\,
$\tilde{x}_1=x_{\sigma(i_0;\{P_X(i)\}_i)}$ if and only if
$r\in I_{\sigma(i_0;\{I_i'\}_i)}'$
from the encoding algorithm,
which holds with probability $|I_{\sigma(i_0;\{I_i'\}_i)}'|/|I'|$ from Lemma \ref{lem_unif}.
Since $|I_{\sigma(0;\{I_i'\}_i)}'|\ge |I_{\sigma(1;\{I_i'\}_i)}'| \ge \cdots
\ge |I_{\sigma(|\mathcal{X}|-1;\{I_i'\}_i)}'|$,
we have
\begin{align}
P_{\tilde{X}_{1,(j)}}(\sigma(i_0;\{P_X(i)\}_i))
&=\frac{I_{\sigma(i_0;\{I_i'\}_i)}'}{\sum_{i\in\cX}I_i}
\le
\frac{1}{i_0+1}\per\n
\end{align}
We also have
\begin{align}
P_{\tilde{X}_{2,(j)}^n|\tilde{X}_{1,(j)}}(\tx_2^n|\tx_1)
\le
\frac{P_{X_{2}^n|X_1}(\tx_2^n|\tx_1)}{|I|}
<
2P_{X_2^n|X_1}(\tx_2^n|\tx_1)\n
\end{align}
from Lemma \ref{lem_unif}.
Thus we obtain \eqref{bound_dist} by
\begin{align}
D_{\phi}
&=\sup_{j\ge 1}
\frac1j\max_{x^{jn} \in \mathcal{X}^{jn}}
\log\frac{P_{\tilde{X}^{jn}}(x^{jn})}{P_{X^{jn}}(x^{jn})}\nn
&\le
\log
\max_{i_0\in\mathcal{X}}
\frac{\frac{1}{i0+1}}{P_{X_1}(x_{\sigma(i_0;\{P_X(i)\}_i)})
}
\max_{x_2^n\in\mathcal{X}_2^n}
\frac{2P_{X_2^n|X_1}(x_2^n|x_1)}{P_{X_2^n|X_1}(x_2^n|x_1)}\nn
&=
\log \max_{i \in \{0,1,\cdots,|\mathcal{X}|-1\}}
\frac{2}{(i+1)\max_{x\in \cX}^{(i)}P_{X_1}(x)}
\per\dqed\n
\end{align}
\end{proof2}

\begin{proof}[Proof of Theorem \ref{thm_performance}]
As the former part is almost the same as the proof of
Theorem \ref{thm_decodable}, we only prove \eqref{bound_rate}.
Since $l(I)$ is the largest $l$ satisfying
\eqref{cond_li}, we have
\begin{align}
\oI
&> \fl{\uI}_{l(I)+1}+|\mathcal{X}|\cdot 2^{-(l(I)+1)}\nn
&> \uI-2^{-(l(I)+1)}+|\mathcal{X}|\cdot 2^{-(l(I)+1)}\nn
&= \uI+(|\mathcal{X}|-1)\cdot 2^{-(l(I)+1)}\com\n
\end{align}
which implies
\begin{align}
l(I)>
-\log |I|+\log ((|\mathcal{X}|-1)/2)
\per\n
\end{align}
Therefore,
the length of the assigned message is given by
\begin{align}
l_1
&\ge-\log 2^{-l(I')}\nn
&>-\log|I'|+\log((|\mathcal{X}|-1)/2)\nn
&\ge -\log P_{X_2^n|X_1}(\tx_{2,(j)}^n|\tx_{1,(j)})+\log((|\mathcal{X}|-1)/2)
\per\label{l_lower}
\end{align}

Now we consider the output distribution
$P_{\tilde{X}_2^n|\tilde{X}_1}$.
Recall that $r$ is uniformly distributed over $I$
by Lemma \ref{lem_unif}.
We have $I=[a,1)$ for some $a\in [0,1/2)$ if $\tx_{1,(j)}=0$ holds,
$I=[0,b)$ for some $b\in [1/2,1)$ if $\tx_{1,(j)}=|\cX|-1$ holds
and $I=[0,1)$ otherwise.
Thus, in the case where $\tx_{1,(j)}\notin \{0,|\mathcal{X}|-1\}$ we have
\begin{align}
\E[l_1]
&\ge \E_{X_2^n}[-\log P_{X_2^n|X_1}(X_2^n|\tx_{1,(j)})]+
\log((|\mathcal{X}|-1)/2)\nn
&\ge H(X_2^n|X_1=\tx_{1,(j)})+
\log((|\mathcal{X}|-1)/2)
\n
\end{align}
for any $s_{\tx_{1,(j)}}$
and we consider the other case in the following.

Now consider the case $\tx_{1,(j)}=0$.
Let
\begin{align}
\bar{f}=
\int_0^1 f(r)\rd r\com\quad\;
s_0=\argmax_{s\in[0,1)}\left\{\int_0^s (f(r)-\bar{f})\rd r\right\}
,\label{def_smin}
\end{align}
for $f(r)=-\log P_{X_2^n|X_1}(F_{x_1}^{-1}(r)|x_1)$.
Then we have
\begin{align}
\lefteqn{
\E_{\tilde{X}_2^n}[-\log P_{X_2^n|X_1}(\tilde{X}_2^n|\tx_{1,(j)})]
}\nn
&=
\frac{1}{1-a}\int_{a}^1 (-\log P_{X_2^n|X_1}(F^{-1}_{0}(r;s_0)|\tx_{1,(j)}))\rd r\nn
&=
\bar{f}+\frac{\int_{\inner{a+s_0}}^{s_0}(f(r)-\bar{f})\rd r}{1-a}\label{int_loop}\\
&=
\bar{f}+\frac{\int_{0}^{s_0}(f(r)-\bar{f})\rd r-\int_0^{\inner{a+s_0}}(f(r)-\bar{f})\rd r}{1-a}
\label{use_min}\\
&\ge
\bar{f}
=
H(X_2^n|X_1=\tx_{1,(j)})\com\n
\end{align}
where \eqref{int_loop} and \eqref{use_min}
follow from $\int_0^1 (f(r)-\bar{f})\rd r=0$
and \eqref{def_smin}, respectively.
In the case $\tx_{1,(j)}=|\mathcal{X}|-1$
we obtain
$\E_{\tilde{X}_2^n}[-\log P_{X_2^n|X_1}(\tilde{X}_2^n|\tx_{1,(j)})]\ge
H(X_2^n|X_1=\tx_{1,(j)})$
in the same way by letting
\begin{align}
s_{|\mathcal{X}|-1}=\argmin_{s\in[0,1)}\left\{\int_0^s (f(r)-\bar{f})\rd r\right\}\per
\label{def_sx}
\end{align}
We obtain \eqref{bound_rate} by
combining this result with
\eqref{l_lower}.
\end{proof}

\section{Conclusion}
In this paper we proposed a variable-to-fixed length homophonic code, DVFH code,
which is easily applied to channel coding for asymmetric channels.
This code can decode each block with one code-symbol decoding delay.
The max-divergence of the generated sequence is bounded by a constant.
The average input length of the code is very close
the entropy in the simulation and
it is shown to be asymptotically larger than
the worst-case conditional entropy
under a slight modification of the code.
An important future work is to construct a
code such that input length provably achieves the entropy rather than
the conditional one.

\section*{Acknowledgment}
This work was supported in part by JSPS KAKENHI Grant Number 16H00881,


\end{document}